\newtheorem{prop}{Proposition}
\newtheorem{lemma}{Lemma}
\newtheorem{theorem}{Theorem}
\newcommand{\R}{\mathbf{R}}
\newcommand{\V}{\mathbf{V}}
\newcommand{\T}{T^E}
\begin{document}
%
\title{Energy Efficiency in Two-Tiered \\ Wireless Sensor Networks\vspace{-10pt}}


\author{Jun Guo$^{\star}$, Erdem Koyuncu$^{\dagger}$, and Hamid Jafarkhani$^{\star}$ \\
 $^{\star}$Center for Pervasive Communications \& Computing,  University of California, Irvine \\  $^{\dagger}$Department of Electrical and Computer Engineering,  University of Illinois at Chicago\vspace{-0pt}}

\maketitle

\begin{abstract}
We study a two-tiered wireless sensor network (WSN) consisting of $N$ access points (APs) and $M$ base stations (BSs). The sensing data, which is distributed on the sensing field according to a density function $f$, is first transmitted to the APs and then forwarded to the BSs. Our goal is to find an optimal deployment of APs and BSs to minimize the average weighted total, or Lagrangian, of sensor and AP powers. For $M=1$, we show that the optimal deployment of APs is simply a linear transformation of the optimal $N$-level quantizer for density $f$, and the sole BS should be located at the geometric centroid of the sensing field. Also, for a one-dimensional network and uniform $f$, we determine the optimal deployment of APs and BSs for any $N$ and $M$. Moreover, to numerically optimize node deployment for general scenarios, we propose one- and two-tiered Lloyd algorithms and analyze their convergence properties. Simulation results show that, when compared to random deployment, our algorithms can save up to 79\% of the power on average.
\end{abstract}


%
\IEEEpeerreviewmaketitle

\section{Introduction}
\label{secIntro}
Energy efficiency is a key issue in wireless sensor networks (WSNs) as most sensors have limited battery life, and it is inconvenient or even infeasible to replenish the batteries of numerous densely deployed sensors. In general, the energy consumption of a device includes communication energy, data processing energy and sensing energy (for sensors). Data processing may include source coding and channel coding and thus consumes a lot of energy \cite{HYHJ}. On the other hand, experimental measurements show that, in many applications, data processing consumes significantly less energy compared to communication \cite{GA, MAR}. Furthermore, for passive sensors, such as light sensors and acceleration sensors, the sensing energy is negligible. Therefore, there are many applications in which communication energy consumption is dominant.

WSNs can be divided into two classes based on their network architectures: non-hierarchical (or flat) WSNs and hierarchical WSNs. In non-hierarchical WSNs, every sensor has the same role and functionality. The connectivity of WSNs is maintained by multi-hop communication among the sensor nodes.
In contrast, hierarchical networks are often divided into clusters. The cluster heads have more powerful functionalities compared to ``regular'' sensor nodes, which connect to their cluster heads by one-hop communication.
In this paper, we focus on the radio communication energy consumption on such hierarchical networks which are relevant in both WSNs \cite{HLJW,WP,JSYKT,MBK} and cellular networks \cite{AHS,HEEH}.

A huge body of literature exists on reducing energy consumed by radio communication. Three kinds of protocols, (i) power control, (ii) routing, and (iii) topology control, are applied to achieve this task. The power control protocols \cite{VK,SN} control the power level at each node while keeping the connectivity. Also, there has been many studies on the connectivity of wireless networks \cite{HYHJ2,EKHJ}.
The routing protocols \cite{KA,YX} attempt to find an optimal path to transfer data. The topology control protocols \cite{XLYM,ZXQW} avoid unnecessary energy consumption by switching node states (awake or asleep). However, to implement the above protocols, nodes require the knowledge of their location via GPS-capable antennas or via message exchange \cite{OYSF}. Clustering \cite{OYSF,VK} is another useful tool to reduce energy consumption and prolong the network lifetime, which requires information exchanges between the neighboring access points (APs). Therefore, most clustering algorithms, such as HEED clustering \cite{OYSF}, require message exchange between the nodes, which adds to energy consumption. Besides, the above energy saving approaches can only be applied after sensor placement. While there has been a lot of work on sensor deployment, including our own work \cite{GJ}, and the AP deployment for energy efficiency in \cite{MG}, to the best of our knowledge,  the optimal energy-efficient base station (BS) and access
point deployment has not been considered in the literature.

In this paper, we study the node deployment problem in two-tiered WSNs consisting of multiple access points and base stations (also called fusion centers). Our goal is to find the optimal deployment of APs and BSs to minimize the total communication energy consumption of the network. We find the optimal deployments and the corresponding minimum powers for certain special cases. We also propose numerical Lloyd-like algorithms to minimize energy consumption in general.

The rest of this paper is organized as follows: In Section \ref{sec:model}, we  introduce the system model. In Section \ref{sec:opt1}, we find the optimal deployment of APs with one BS. In Section \ref{sec:opt2}, we study the optimal deployment problem with multiple APs and BSs. In Section \ref{sec:algorithm}, we propose numerical algorithms to minimize the energy consumption. In Section \ref{sec:simulation}, we present numerical simulations. Finally, in
Section \ref{sec:conclusion}, we draw our main conclusions. {\color{black}Some of the technical proofs are provided in the appendices.}
\section{System Model and Problem Formulation}\label{sec:model}
We consider a two-tiered WSN consisting of three kinds of nodes: sensors, APs, and BSs. As discussed in Section \ref{secIntro}, there are many examples for which the energy consumption of the sensors is dominated by communication energy. In these cases, sensors are also usually small and relatively cheap. Hence, it is plausible that many sensors are deployed in the target region. On one hand, these densely distributed sensor nodes are grouped into clusters and transfer data to the local APs via single-hop communication. On the other hand, APs collect data from clusters and forward the data to the associated BSs.
Under such two-tiered architecture, the complex routing protocol and the corresponding message exchanges can be avoided. A similar network architecture has been studied in \cite{YTH} where the authors ignore the energy consumption of the sensor nodes.
In what follows, we model the energy consumption by radio communication in two-tiered WSNs.


Let $\Omega$ be a convex polygon in $\Re^{2}$ including its interior. Given the target area $\Omega$, $N$ APs and $M$ BSs are deployed to collect data. $\mathcal{I}_{A}=\{1,\cdots,N\}$, and $\mathcal{I}_{B}=\{1,\cdots,M\}$ denote, respectively, the sets of AP indices and BS indices, respectively.
AP deployment and BS deployment are, respectively, defined by $P=(p_1,\cdots,p_N)$ and $Q=(q_1,\cdots,q_M)$, where $p_n\in\Omega$ is the location of AP $n$ and the location of $q_m\in\Omega$ is BS $m$. An AP partition $\R^{A}=\{R^{A}_n\}_{n\in \mathcal{I}_{A}}$ is a collection of disjoint subsets of $\Re^2$ whose union is $\Omega$.
Let $T: \mathcal{I}_{A}\to \mathcal{I}_{B}$ be an index map for which $T(n)=m$ if and only if AP $n$ is connected to BS $m$. A continuous and differentiable function $f(\cdot): \Omega^2\to\Re^+$ is used to denote the density of the data rate from the densely distributed sensors \cite{YTH}.

Usually, BSs have access to reliable energy sources and their energy consumption is not the main concern in this paper. Therefore, we focus on the energy consumption of sensors and APs. In fact, the energy consumed by sensors and APs comes from three parts: (i) Sensors transmit bit streams to APs; (ii) APs transmit bit streams to BSs; (iii) APs receive bit streams from sensors.
{\color{black}The transmitting powers (Watt) of nodes, e.g., sensors and APs, mainly depend on two factors: (i) The per-bit transmission energy (Joules/bit); (ii) The average bit rate (bits/s) going through the device.}
The average transmitting power of AP $n$ is defined as $\mathcal{P}_{t_n}^{A}=E_{t_n}^{A}\Gamma_n, n\in \mathcal{I}_{A}$, where $E_{t_n}^{A}$ is the per-bit transmission energy of AP $n$ and $\Gamma_n$ is the average bit rate transmitted from AP $n$ to the corresponding BS. $\Gamma_n$ is also the bit rate gathered from sensors in $R^{A}_n$. Therefore, we model the bit rate as $\Gamma_n=\int_{R^{A}_n}\!f(w)dw$.
Note that the signal-to-noise ratio (SNR) at the receiver depends on the transmitting distance and the antenna gain. In order to achieve the required SNR thresholds at the receivers, the per-bit transmission energy $E_{t_n}^{A}$ should be set to a value that is determined by the distance, the antenna gain, and the SNR threshold \cite{RH}. More realistically, the transmission power is proportional to distance squared or another power of distance in the presence of obstacles \cite{KA}.
Taking path-loss into consideration, it is reasonable to model the per-bit transmission energy from AP $n$ to BS $T(n)$ by
$E_{t_n}^{A} = \xi_{nT(n)}\|p_n-q_{T(n)}\|^{\alpha}$, where $\|\cdot\|$ denotes the Euclidean distance, $\xi_{nT(n)}$ is a constant determined by the antenna gain of AP $n$ and the SNR threshold of BS $T(n)$, and $\alpha\in[2,4]$ is the path-loss parameter.
We consider an environment without obstacles, i.e., $\alpha=2$. Moreover, we focus on homogeneous sensors, APs, and BSs. Hence, we consider identical sensor antenna gains, identical AP antenna gains, identical AP SNR thresholds, and identical BS SNR thresholds. Without loss of generality, we set $\xi_{nT(n)}=\xi$.

Let us now discuss the energy consumption at receivers. According to \cite{YTH}, the power at the receiver of AP $n$ can be modeled as $\mathcal{P}_{r_n}^{A}=\rho_n\sum\Gamma_n, n\in \mathcal{I}_{A}$, where $\rho_n$ is a constant determined by the SNR threshold of AP $n$. For homogeneous APs, $\rho_n=\rho,\forall n\in \mathcal{I}_{A}$. The power consumption at receivers is a constant and thus ignored in our objective function. Therefore,
to consider the total energy consumption, we use the sum of the AP transmission powers, $P^A_{t_n}$, which is calculated as
\begin{equation*}
\mathcal{P}^{A}(P,Q,\R^{A},T)=\sum_{n=1}^{N}\int_{R^{A}_n}\xi\|p_n-q_{T(n)}\|^{2}f(w)dw.
\end{equation*}
Similarly, the sum of the sensor transmission powers is calculated as
\begin{equation*}
\mathcal{P}^{S}(P,\R^{A})=\sum_{n=1}^{N}\int_{R^{A}_n}\eta\|p_n-w\|^{2}f(w)dw,
\end{equation*}
where $\eta$ is a constant determined by the antenna gain of sensors and the SNR threshold of the corresponding AP.

Furthermore, the sensor energy consumption is generally more crucial than the AP energy consumption because there are more sensors and it is more difficult to replenish energy in sensors. To compensate for any preference between energy consumptions in APs and sensors, we multiply the AP power by a non-negative weight $\gamma$. Let $\beta=\frac{\gamma\xi}{\eta}$.
We define the objective function (distortion) to be minimized as
\begin{equation}
D(P,Q,\R^{A},T)=\frac{1}{\eta}\left[\mathcal{P}^{S}(P,\R^{A})+\gamma \mathcal{P}^{A}(P,Q,\R^{A},T)\right]=\sum_{n=1}^{N}\int_{R^{A}_n}\left[\|p_n-w\|^{2}+\beta\|p_n-q_{T(n)}\|^{2}\right]f(w)dw.
\label{cost2}
\end{equation}
Our main goal is to minimize the distortion defined in (\ref{cost2}) by choosing the optimal AP deployment and the optimal BS deployment. Although this optimization problem is motivated by two-tiered WSNs, it can be applied to the two-tiered cellular networks, where sensor nodes can be other wireless devices, such as laptops and cell phones.
\section{The Best Possible Distortion for the Two-tiered WSNs with One BS}\label{sec:opt1}
The node deployment problem in the two-tiered WSNs can be interpreted as a two-tiered quantization problem whose reproduction points are APs and BSs.
Similarly, if one only considers the energy consumption of sensors, the corresponding optimization problem becomes a ``regular" quantization problem with distortion
\begin{equation}
 D_r(X,\R) = \sum_{n=1}^N\int_{R_n}\|x_n-w\|^2f(w)dw.
\end{equation}
Let $(X^*,\R^*)=\arg\min_{(X,\R)}D_r(X,\R)$ be the optimal regular quantizer.
In some cases \cite{YTH,HLJW,WP,JSYKT}, only one BS or fusion center is deployed to collect data from WSNs. We assume one BS and multiple APs in the following proposition.
\begin{prop}
For a two-tiered WSN with one BS, we have the following:
\begin{enumerate}[label=(\roman*)]
\item The optimal BS location is the centroid of the target region, i.e., $q^*=\frac{\int_\Omega wf(w)dw}{\int_\Omega f(w)dw}$.
\item The optimal AP locations of the two-tiered WSNs are linear transformations of the optimal reproduction points $X^*=(x^*_1,\dots,x^*_N)$, i.e., $p^*_n=\frac{x^*_n+\beta q^*}{1+\beta}, n\in\mathcal{I}_{A}$.
\item The optimal AP partition is the same as the optimal regular quantizer partition $\R^*=(R^*_1\dots,R^*_N)$, i.e., $\R^{A*}=\R^*$. \item The best possible distortion is $$\frac{1}{1+\beta}D_r(X^*,\R^*)+\frac{\beta}{1+\beta}\int_{\Omega}\|w-q^*\|^2f(w)dw.$$
\end{enumerate}
\label{Transformation}
\end{prop}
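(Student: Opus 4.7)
The plan is to apply a completing-the-square identity that decouples the contributions of the APs and the single BS. Specializing \eqref{cost2} to $M=1$ and writing $q$ for the unique BS location, the objective becomes $D(P,q,\R^{A}) = \sum_{n} \int_{R^A_n}\bigl[\|p_n-w\|^2 + \beta\|p_n-q\|^2\bigr]f(w)\,dw$. The key identity, which I would verify by direct expansion, is
$$\|p_n-w\|^2 + \beta\|p_n-q\|^2 = (1+\beta)\left\|p_n - \tfrac{w+\beta q}{1+\beta}\right\|^2 + \tfrac{\beta}{1+\beta}\|w-q\|^2.$$
Integrating and summing, and using that $\{R^A_n\}$ partitions $\Omega$, the last term collapses to $\tfrac{\beta}{1+\beta}\int_\Omega\|w-q\|^2 f(w)\,dw$, depending on $q$ alone.

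I would then reparameterize the APs by $x_n = (1+\beta)p_n - \beta q$, equivalently $p_n = (x_n + \beta q)/(1+\beta)$. A one-line calculation gives $(1+\beta)\|p_n - (w+\beta q)/(1+\beta)\|^2 = \tfrac{1}{1+\beta}\|x_n - w\|^2$, so
$$D(P,q,\R^A) = \frac{1}{1+\beta}\,D_r(X,\R^A) + \frac{\beta}{1+\beta}\int_\Omega \|w-q\|^2 f(w)\,dw.$$
The two summands now involve disjoint decision variables, so the joint minimization separates: the first is minimized by the optimal regular quantizer $(X^*,\R^*)$ by definition, while the second is a convex quadratic in $q$ minimized by the $f$-weighted centroid $q^* = \int_\Omega wf(w)\,dw / \int_\Omega f(w)\,dw$. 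This immediately delivers (i), (iii), and (iv), and inverting the reparameterization at $(X^*, q^*)$ yields $p^*_n = (x^*_n + \beta q^*)/(1+\beta)$, which is (ii).

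The one subtlety is feasibility: the map $P \mapsto X$ does not a priori send $\Omega^N$ onto $\Omega^N$, so the lower bound obtained by dropping constraints could in principle be loose. I would close this gap by noting that the unconstrained minimizer $x^*_n$, being the $f$-weighted centroid of $R^*_n \subseteq \Omega$, lies in $\Omega$; that $q^*$ lies in $\Omega$ for the same reason; and hence $p^*_n$, a convex combination of $x^*_n$ and $q^*$, lies in $\Omega$ by convexity of the target region. The main obstacle is really just spotting the identity that simultaneously decouples $p_n$ from $q$ and from $w$; once it is in hand, the rest is a routine invocation of the regular-quantizer definition together with the standard fact that the centroid minimizes a weighted second moment.
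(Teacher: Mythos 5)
Your proposal is correct and follows essentially the same route as the paper: the identity you complete the square with is exactly the paper's equation (\ref{eq}), the reparameterization $x_n=(1+\beta)p_n-\beta q$ reproduces the paper's decomposition in (\ref{costHR1}), and the separation into a regular-quantizer term plus a centroid term minimized via the parallel axis theorem is identical. Your closing remark on feasibility ($p_n^*\in\Omega$ by convexity of $\Omega$) is a small point the paper leaves implicit, but it does not change the argument.
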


\begin{IEEEproof}
When there is only one BS, all APs transfer data to the unique BS located at $q$, and the index map is simply given by $T(n)=1, n\in \mathcal{I}_{A}$. The corresponding distortion is
\begin{align*}
D(P,Q,\R^{A},T) \!= \! \sum_{i=1}^N\int_{R^{A}_n}\!\!\left[\|p_n\!-\!w\|^2 \!+\! \beta\|p_n\!-\!q\|^2\right]f(w)dw.
\end{align*}
Since
\begin{align} \textstyle
\|p_n\!-\!w\|^2\!+\!\beta\|p_n\!-\!q\|^2\!=\!(1\!+\!\beta)\bigl\|p_n\!-\!\frac{(w+\beta q)}{1+\beta}\bigr\|^2 \!+\! \frac{\beta\|w-q\|^2}{1+\beta},
\label{eq}
\end{align}
we obtain
\begin{equation}
\!\!\! D(P,Q,\R^{A},T) = \frac{1}{1+\beta}\sum_{n=1}^N\int_{R^{A}_n}\!\!\!\|\left((1+\beta)p_n\!-\!\beta q\right)-w\|^2f(w)dw+ { \frac{\beta}{1+\beta}}\int_{\Omega}\|w-q\|^2f(w)dw,
\label{costHR1}
\end{equation}
The first term in (\ref{costHR1}) is the distortion of a regular quantizer with linear transformation of its reproduction points (AP locations). The minimum value of the first term is $D_r(X^*,\R^*)$ and can be achieved by choosing the optimal AP deployment $P$ for any BS location $q$. On the other hand, the second term in (\ref{costHR1}) is the distortion of another quantizer whose reproduction point is the BS, and is independent of the choice of AP locations and partition cells. In other words, the second term only depends on the BS location $q$. As a result, one can optimize (\ref{costHR1}) by finding the optimal $q^*$ to minimize the second term and then calculate the optimal AP deployment $P^*$ for $q^*$.
By parallel axis theorem, the second term achieves the minimum if and only if the BS is placed at the geometric centroid $c=\frac{\int_\Omega wf(w)dw}{\int_\Omega f(w)dw}$ of $\Omega$, which proves (i).
The best possible distortion is then the summation of $\frac{1}{1+\beta}D_r(X^*,\R^*)$ and $\frac{\beta}{1+\beta}\int_{\Omega}\|w-c\|f(w)dw$, which proves (iv).
The two-tiered quantizer achieves this minimum when $q^*=c$, $x^*_n=\left((1+\beta)p^*_n-\beta q^*\right), n\in \mathcal{I}_{A}$, and $\R^{A}=\R^*$, which proves (ii) and (iii).
\end{IEEEproof}

By Proposition \ref{Transformation}, one can obtain the optimal solution for the two-tiered quantization by shrinking the optimal reproduction points of the regular quantizer, {\color{black}which have been well studied in \cite{gray1,EH},} towards the corresponding BSs.
{\color{black}For example, consider one BS, $n$ APs, and a uniform distribution over the $1$-dimensional target region $\Omega = [s,t]$. The reproduction points and the cells of an optimal regular quantizer are given by $x^*_n=s+\frac{(2n-1)(t-s)}{2N}, n\in \mathcal{I}_A$, and $R^{*}_n=\left[s+\frac{(n-1)(t-s)}{N},s+\frac{n(t-s)}{N}\right], n\in \mathcal{I}_A$.
Therefore, in the two-tiered WSNs, the optimal BS location is
\begin{equation}
q^* = \frac{t+s}{2}.
\end{equation}
The optimal AP locations are
\begin{equation}
p_n^* = s+\frac{\beta(t-s)}{2(1+\beta)}+\frac{(2n-1)(t-s)}{2N(1+\beta)}, n\in \mathcal{I}_{A},
\end{equation}
and the optimal AP cell partitions are
\begin{equation}
R^{A*}_n=\left[s+\frac{(n-1)(t-s)}{N}, s+\frac{n(t-s)}{N}\right], n\in \mathcal{I}_{A}.
\end{equation}
The corresponding minimum distortion is
\begin{equation}
\frac{(t-s)^2}{12(1+\beta)N^2}+\frac{\beta(t-s)^2}{12(1+\beta)}.
\label{1dopt}
\end{equation}}
In particular, for $\Omega=[-\frac{1}{2},\frac{1}{2}]$ with $1$ BS and $4$ APs, the optimal BS location is $0$, the optimal cells are $R^{A*}_n=\left[\frac{n-3}{4}, \frac{n-2}{4}\right], n\in \mathcal{I}_{A}$, the optimal AP locations are $p_n^* = \frac{2n-5}{16}, n\in \mathcal{I}_{A}$, and the best possible distortion is $\frac{17}{384}$. The optimal deployment and partition are shown in Fig. \ref{deploymentexample}.

{\setlength{\floatsep}{10pt plus 1.0pt minus 20.0pt}
\begin{figure}[!htb]
\setlength\abovecaptionskip{0pt}
\setlength\belowcaptionskip{-20pt}
\centering
\includegraphics[width=7in]{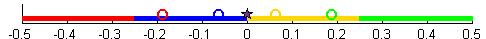}
\captionsetup{justification=justified}
\caption{An example of the best deployment and partition. The optimal AP locations are denoted by circles. The optimal partition cells are denoted by intervals. The optimal BS location is denoted by the star. Each AP and its corresponding cell are illustrated by the same color.}
\label{deploymentexample}
\end{figure}}

\section{The Optimal Deployment in two-tiered WSNs with Multiple BSs}\label{sec:opt2}
In this section, we extend the analysis of the optimal deployment to WSNs with multiple BSs. Distortion in this general case is determined by (i) AP deployment, (ii) BS deployment, (iii) AP cell partition, and (iv) Clustering (or the index map from APs to BSs). Before we discuss the optimal AP and BS deployment, we need to know (a) the best index map $T$ given $P$, $Q$, and $\R^{A}$, and (b) the best AP partition $\R^{A}$ given $P$, $Q$, and $T$.

The index map $T$ only influences the second term in (\ref{cost2}).
To minimize the second term, each AP should transfer data to the closest BS. Thus, the best index map is $\T_{[P,Q]}(n)=\arg\min_{m}\|p_n-q_m\|$.
However, given $P$, $Q$, and $T=\T_{[P,Q]}$, AP cell partition $\R^{A}$ affects both terms in (\ref{cost2}). The best AP cell partitions, called the energy Voronoi diagrams (EVDs), are
\begin{equation*}
V_n^E(P,Q) = \{w|\|p_n-w\|^{2}+\beta\|p_n-q_{\T_{[P,Q]}(n)}\|^{2}\le\|p_l-w\|^{2}+\beta\|p_l-q_{\T_{[P,Q]}(l)}\|^{2},\forall l\ne n\}, n\in \mathcal{I}_{A}.
\end{equation*}
Now, let $\V^E(P,Q)\!\!\!=\!\!\!\{V_n^{E}(P,Q)\}_{n\in \mathcal{I}_{A}}$ be the energy Voronoi partition.
Putting the best index map $\T_{[P,Q]}$ and the best AP partition $\V^E(P,Q)$ into (\ref{cost2}), the distortion is
\begin{equation}
\widetilde{D}(P,Q) = D(P,Q,\V^E(P,Q),\T_{[P,Q]})
\!=\!\sum_{n=1}^{N}\!\int_{V_n^{E}(P,Q)}\!\!\!\!\!\bigl(\|p_n\!-\!w\|^{2}\!+\!\beta\min_m\!\|p_n\!-\!q_{m}\|^{2}\bigr)f(\!w\!)dw
\label{OBJ}
\end{equation}
We have the following result, whose proof is provided in Appendix \ref{appendix1}.
\begin{prop}
Let $\alpha=2$ and $N>M$. The necessary conditions for the optimal deployment in a two-tiered WSN are
\begin{align}
p^*_n & =\frac{ c_n(P^*,Q^*)+\beta q^*_{\T_{[P^*,Q^*]}(n)}}{1+\beta}, n\in \mathcal{I}_{A},
\label{optP} \\
q_m^* & =\frac{\sum_{n:\T_{[P^*,Q^*]}(n)=m}c_n(P^*,Q^*)v_n(P^*,Q^*)}{\sum_{n:\T_{[P^*,Q^*]}(n)=m}v_n(P^*,Q^*)}, m\in \mathcal{I}_{B},
\label{optQ}
\end{align}
where $p_n^*$ is the optimal location for AP $n$ and $q_m^*$ is the optimal location for BS $m$, $\T_{[P,Q]}(n)=\arg\min_{m}\|p_n-q_m\|$ is the best index map, $v_n(P^*,Q^*)=\int_{V^E_n(P^*,Q^*)}f(w)dw$ is the volume of $V^E_n(P^*,Q^*)$ and $c_n(P^*,Q^*)$ is the geometric centroid of $V^E_n(P^*,Q^*)$.
\label{prop1}
\end{prop}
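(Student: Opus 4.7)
The plan is to derive the two first-order necessary conditions by differentiating the reduced objective $\widetilde{D}(P,Q)$ in \eqref{OBJ}, in which both the optimal index map $\T_{[P,Q]}$ and the energy Voronoi partition $\V^E(P,Q)$ have already been substituted. Because both are chosen optimally for each $(P,Q)$, an envelope-type argument will let me differentiate the integrand while treating the moving cell boundaries and the discrete jumps in $\T_{[P,Q]}$ as inactive, producing clean expressions that can be set to zero.

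For condition \eqref{optP}, I fix an AP index $n$ and perturb $p_n$ at a candidate optimum $(P^*,Q^*)$; write $m^{\star}=\T_{[P^*,Q^*]}(n)$ for its assigned BS. For perturbations that do not cross an EVD boundary or change the index map, the only integrands that depend on $p_n$ are those over $V_n^E$, since each integrand over $V_l^E$ with $l\neq n$ involves only $p_l$ and $q_{\T_{[P^*,Q^*]}(l)}$. Applying the Leibniz rule, the moving-boundary terms contributed by $\partial V_n^E$ and the neighboring $\partial V_l^E$ cancel pairwise: by the defining inequality of $V_n^E$, the energy cost $\|p_n-w\|^{2}+\beta\|p_n-q_{m^{\star}}\|^{2}$ agrees with its counterpart across each shared boundary. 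What remains is
\begin{equation*}
\nabla_{p_n}\widetilde{D}(P^*,Q^*) = 2\!\!\int_{V_n^E(P^*,Q^*)}\!\!(p_n^*-w)f(w)dw + 2\beta\, v_n(P^*,Q^*)\bigl(p_n^*-q^*_{m^{\star}}\bigr).
\end{equation*}
Using $\int_{V_n^E} w\, f(w)\, dw = c_n v_n$ and setting the gradient to zero yields $(1+\beta)p_n^* = c_n(P^*,Q^*) + \beta q^*_{m^{\star}}$, which is \eqref{optP}.

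For condition \eqref{optQ}, I perturb $q_m$ for a fixed BS index $m$. The only terms in $\widetilde{D}$ that depend on $q_m$ are $\beta\|p_n-q_m\|^{2}$ with $\T_{[P^*,Q^*]}(n)=m$, integrated over the corresponding $V_n^E$; the moving-boundary contributions again cancel by continuity of the energy cost. Differentiating $\beta\sum_{n:\T_{[P^*,Q^*]}(n)=m}v_n(P^*,Q^*)\|p_n^*-q_m\|^{2}$ and equating to zero gives the provisional identity $q_m^* = \bigl(\sum_{n:\T_{[P^*,Q^*]}(n)=m} v_n p_n^*\bigr)\big/\bigl(\sum_{n:\T_{[P^*,Q^*]}(n)=m} v_n\bigr)$. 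Substituting \eqref{optP} in the rearranged form $(1+\beta)p_n^* = c_n + \beta q_m^*$ collapses the $\beta q_m^*$ contributions and produces exactly \eqref{optQ}. The hypothesis $N>M$ is used here to guarantee that each BS receives at least one AP, so the denominators do not vanish.

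The main obstacle is justifying rigorously that the moving-partition and discrete-jump contributions truly vanish. For the EVD boundaries this is the classical Lloyd-style envelope argument sketched above, which relies crucially on the fact that $\V^E(P,Q)$ is itself an argmin, so the integrand values match across any shared $\partial V_n^E\cap\partial V_l^E$. For the jumps in $\T_{[P,Q]}$, one observes that a generic optimum places each AP strictly closer to its assigned BS than to any other, so sufficiently small perturbations leave $\T_{[P,Q]}$ unchanged; at non-generic optima one passes to one-sided directional derivatives and checks that the stated equations remain necessary for any admissible tie-breaking. These are precisely the technicalities that Appendix \ref{appendix1} will need to formalize.
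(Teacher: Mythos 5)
Your proposal is correct and reaches the same first-order conditions, but it takes a genuinely different route from the paper. You differentiate the fully reduced objective $\widetilde{D}(P,Q)$ of (\ref{OBJ}) directly, which forces you to invoke a Leibniz/envelope argument to dispose of the moving EVD boundaries and the possible jumps in $\T_{[P,Q]}$ --- precisely the technicalities you flag at the end. The paper sidesteps all of this with a sandwich argument: since $D(P^*,Q^*,\V^E(P^*,Q^*),\T_{[P^*,Q^*]})\le D(P,Q,\V^E(P,Q),\T_{[P,Q]})\le D(P,Q,\V^E(P^*,Q^*),\T_{[P^*,Q^*]})$ for every $(P,Q)$, the optimal deployment $(P^*,Q^*)$ must also minimize the map $(P,Q)\mapsto D(P,Q,\V^E(P^*,Q^*),\T_{[P^*,Q^*]})$ in which the partition and index map are \emph{frozen} at their optimal values. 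After an application of the parallel axis theorem this frozen objective is an explicit convex quadratic in $(P,Q)$, so setting its ordinary partial derivatives to zero immediately yields (\ref{optP}) and, after the same substitution you perform, (\ref{optQ}); no boundary terms ever arise. What your approach buys is a direct statement about stationarity of $\widetilde{D}$ itself (the standard Lloyd/centroidal-Voronoi argument), at the cost of the continuity-across-boundaries and tie-breaking justifications; what the paper's approach buys is that only elementary calculus on a polynomial is needed. Both correctly use $N>M$ only to keep the denominators in (\ref{optQ}) nonzero, and your algebra collapsing $\beta q_m^*$ to obtain (\ref{optQ}) matches the paper's final substitution step exactly.
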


According to (\ref{optP}), the optimal location of AP $n$ should be on the segment $\overline{c_n(P,Q)q_m}$, where AP $n$ is connected with BS $m$. According to (\ref{optQ}), the best location of BS $m$ should be the geometric centroid of the $m^{th}$ cluster region $\bigcup_{n:T(n)=m}V^E_n(P^*,Q^*)$. Obviously, the optimal deployment and the optimal partition in Proposition \ref{Transformation} also satisfy the necessary conditions in Proposition \ref{prop1}. In the next section, using Proposition \ref{prop1}, we design Lloyd-like algorithms to determine the optimal deployment.

First, note that when the AP cell partition is fixed, the geometric centroid $c_n, n\in \mathcal{I}_{A}$, and the volume of the cells $v_n, n\in \mathcal{I}_{A}$, are fixed. Second, the index map $T^*$ represents the best connection between APs and BSs (or clustering) if and only if $P$ and $Q$ are given.
We now find the optimal
deployment, the optimal partition, the optimal index map and the best possible distortion for a uniform density in one-dimensional space.
\begin{theorem}
Let $\Omega=[s,t]$ with length $\mu(\Omega)=t-s$. Also, let
\begin{align}
\ell_a &= \left(\beta+\lceil\tfrac{N}{M}\rceil^{-2}\right)^{-\frac{1}{2}}, & \ell_b &= \left(\beta+\lfloor\tfrac{N}{M}\rfloor^{-2}\right)^{-\frac{1}{2}} \\
M_a & =(N\mbox{ mod } M), & M_b & =M-(N\mbox{ mod } M).
\end{align}
Then, given a uniform distribution on $\Omega$ with $M$ BSs and $N$ APs, the minimum distortion is
\begin{align}
\label{optD}
\frac{\mu^2(\Omega)}{12(1+\beta)}\left(M_a\ell_a+M_b\ell_b\right)^{-2}.
\end{align}
The minimum is achieved
if and only if
\begin{enumerate}[label=(\roman*)]
\item $M_a$ of the clusters are associated with $\lceil\frac{N}{M}\rceil$ APs each and have length $\ell_a \mu(\Omega) / (M_a\ell_a + M_b\ell_b)$,\label{I},
\item $M_b$ of the clusters are associated with $\lfloor\frac{N}{M}\rfloor$ APs each and have length $\ell_b \mu(\Omega) / (M_a\ell_a + M_b\ell_b)$,\label{II},
\item BSs are deployed at the centroids of the cluster regions,\label{III},
\item AP cells are uniform partitions of the cluster\label{IV},, and
\item AP $n$ is deployed at $\frac{c_n+\beta q_{T(n)}}{1+\beta}$, $n\in \mathcal{I}_{A}$, where $c_n$ is the geometric centroid of the AP $n$'s cell.\label{V},
\end{enumerate}
\label{2dcost}
\end{theorem}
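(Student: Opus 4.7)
The plan is to reduce the multi-BS problem to a collection of single-BS subproblems, each handled directly by Proposition~\ref{Transformation}, and then to optimize over how $\Omega$ is decomposed into those subproblems.

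\textbf{Structural reduction.} I would first invoke the necessary conditions of Proposition~\ref{prop1}: at any optimum the index map is nearest-BS and the AP partition is the energy Voronoi diagram. Relabel so that $p_1\le\cdots\le p_N$ and $q_1\le\cdots\le q_M$. In one dimension the EVD cells are intervals (each cell is the minimizer of a shifted parabola in $w$), and the nearest-BS rule in 1D makes $T$ monotone in $n$. Consequently the APs served by any given BS form a contiguous block, and the cluster region $I_m=\bigcup_{n:T(n)=m}V_n^{E}$ is itself an interval $[t_{m-1},t_m]$ of length $L_m$, with $s=t_0<t_1<\cdots<t_M=t$, $\sum_m L_m=\mu(\Omega)$, and $\sum_m K_m=N$ where $K_m=|T^{-1}(m)|$.

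\textbf{Per-cluster optimization.} The distortion (\ref{cost2}) splits additively as $D=\sum_m D_m$, where $D_m$ depends only on cluster $m$'s APs, BS $q_m$, and the points in $I_m$. Applying Proposition~\ref{Transformation} to each cluster (as a single-BS problem on $I_m$ under the restricted uniform density $1/\mu(\Omega)$) produces
\begin{align*}
D_m^{*}=\frac{L_m^{3}(\beta+K_m^{-2})}{12\,\mu(\Omega)(1+\beta)},
\end{align*}
achieved exactly when BS $m$ lies at the centroid of $I_m$, the AP cells uniformly partition $I_m$, and each AP sits at $(c_n+\beta q_m)/(1+\beta)$. This establishes parts \ref{III}, \ref{IV}, and \ref{V}. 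A short computation also verifies that with these choices the EVD boundary between the last AP of cluster $m$ and the first of cluster $m+1$ lands exactly at $t_m$ iff $L_m^{2}(\beta+K_m^{-2})$ is constant in $m$, which is precisely the condition that emerges from the outer optimization below.

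\textbf{Outer optimization over $(L_m,K_m)$.} It remains to minimize $\sum_m L_m^{3}(\beta+K_m^{-2})$ subject to $\sum L_m=\mu(\Omega)$ and $\sum K_m=N$ with $K_m\in\mathbb{Z}_{\ge1}$. For fixed $\{K_m\}$, the objective is strictly convex in $\{L_m\}$, and Lagrange multipliers yield the unique minimizer $L_m\propto(\beta+K_m^{-2})^{-1/2}$ with minimum value $\mu^{3}(\Omega)\big/\bigl(\sum_m(\beta+K_m^{-2})^{-1/2}\bigr)^{2}$. One then maximizes $\sum_m g(K_m)$ with $g(K)=(\beta+K^{-2})^{-1/2}$ over integer allocations summing to $N$. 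Direct differentiation gives $g''(K)=-3\beta K^{-4}(\beta+K^{-2})^{-5/2}<0$, so $g$ is strictly concave, and the standard swap argument (replacing any pair with $K_m-K_{m'}\ge2$ by $(K_m-1,K_{m'}+1)$ strictly increases $\sum g$) forces all $K_m\in\{\lfloor N/M\rfloor,\lceil N/M\rceil\}$ with exactly $M_a$ copies of $\lceil N/M\rceil$. Substituting back produces $\sum_m g(K_m)=M_a\ell_a+M_b\ell_b$ and the distortion (\ref{optD}); the corresponding cluster lengths $L_m\propto\ell_a$ or $\ell_b$ are precisely those in \ref{I} and \ref{II}, and BS placement at the cluster centroid gives \ref{III}.

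\textbf{Main obstacle.} The delicate step is the structural reduction: the EVD weights $\beta\|p_n-q_{T(n)}\|^{2}$ depend on $T$ itself, so showing that the cluster \emph{regions} (not merely the AP indices) are intervals requires combining 1D monotonicity of the nearest-BS map with the shape of the EVD boundaries. The integer/majorization step in the outer optimization is the other nontrivial ingredient, but becomes routine once strict concavity of $g$ has been verified.
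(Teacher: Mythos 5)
Your proposal is correct, but it takes a genuinely different route from the paper's proof, and it is worth seeing what each buys. The paper never establishes a priori that cluster regions are intervals: it lower-bounds the distortion for \emph{arbitrary} measurable cluster regions $W_m$ and arbitrary AP partitions through the chain (\ref{LB0})--(\ref{LB4}), using Lemma~\ref{dna} (a bound on the $N$-level quantizer distortion of an arbitrary measurable subset of $\mathbb{R}$, itself proved via reverse H\"older) followed by a second reverse H\"older inequality across clusters, and then reads off conditions (i)--(v) from the equality cases of each inequality; the interval structure of the $W_m$ is an \emph{output} of the equality analysis rather than an input. Your structural reduction --- EVD cells are intervals ordered consistently with AP positions, the nearest-BS map is monotone in 1D, hence clusters are contiguous intervals --- replaces all of that, and it does go through (each EVD cell is an intersection of half-lines since the quadratics share leading coefficient, and the cells of a contiguous AP block tile an interval), but it is precisely the step the paper's method avoids, and it carries the burden of the ``only if'' direction including tie-breaking and empty-cell degeneracies. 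Your Lagrange-multiplier optimization over the $L_m$ with minimum $\mu^3(\Omega)\bigl(\sum_m(\beta+K_m^{-2})^{-1/2}\bigr)^{-2}$ is exactly equivalent to the paper's reverse-H\"older step (\ref{LB2})--(\ref{LB3}), and your integer swap argument is the content of Lemma~\ref{DLB}; observing strict concavity of $g(K)=(\beta+K^{-2})^{-2/2\cdot 1/1}$, i.e., $g(K)=K(1+\beta K^2)^{-1/2}$ with $g''(K)=-3\beta K(1+\beta K^2)^{-5/2}<0$, is arguably a cleaner justification than the paper's monotonicity computation for $\widetilde D^{UB}_{ij}(\delta)$ (both, incidentally, lose strictness when $\beta=0$). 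Your consistency check that the inter-cluster EVD boundary falls at $t_m$ iff $L_m^2(\beta+K_m^{-2})$ is constant in $m$ is correct and is a nice verification absent from the paper, though not logically required since achievability only needs exhibiting one feasible $(P,Q,\R^A,T)$. In short: the paper's inequality-chain approach buys robustness and a mechanical treatment of the equality conditions, while your decomposition buys modularity (each cluster is literally Proposition~\ref{Transformation}) at the cost of having to make the 1D ordering argument fully rigorous.
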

The proof is provided in {\color{black}Appendix \ref{appendix2}.}\\
{\color{black}In particular, when $K=\frac{N}{M}$ is a positive integer,
the optimal BS locations are
\begin{equation}\small
q^*_m = s+\frac{(2m-1)(t-s)}{2M}, m\in \mathcal{I}_{B},
\label{2doptq}
\end{equation}
and the optimal index map is
\begin{equation}\small
T^*(n)=\left\lceil\frac{n}{K}\right\rceil, n\in \mathcal{I}_{A}.
\label{2doptT}
\end{equation}
The optimal AP locations are
\begin{equation}\small
p_n^* = s+\frac{(t-s)}{(1+\beta)}\left(\frac{(2n-1)}{2N}+\beta\frac{(2\lceil\frac{n}{K}\rceil-1)}{2M}\right), n\in \mathcal{I}_{A},
\label{2doptp}
\end{equation}
and the optimal AP cell partitions are
\begin{equation}\small
R^{A*}_i=\left[s+\frac{(n-1)(t-s)}{N}, s+\frac{n(t-s)}{N}\right], n\in \mathcal{I}_{A}.
\label{2doptR}
\end{equation}
The corresponding minimum distortion is
\begin{equation}\small
\frac{(t-s)^2}{12(1+\beta)M^2}\left(\frac{1}{K^2}+\beta\right).
\label{2dopt}
\end{equation}
}
\section{One-tiered and Two-tiered Lloyd Algorithms}\label{sec:algorithm}
We design two algorithms, one-tiered Lloyd (OTL) and two-tiered Lloyd (TTL) algorithms, to minimize the distortion in two-tiered WSNs. First, we quickly review the conventional Lloyd algorithm. Lloyd Algorithm has two basic steps in each iteration: (i) The node deployment is optimized while the partitioning is fixed; (ii) The partitioning is optimized while the node deployment is fixed.  As shown in \cite{GJ}, Lloyd algorithm, which provides good performance and is simple enough to be implemented distributively, can be used to solve regular quantizers or one-tiered node deployment problems.
However, the conventional Lloyd Algorithm cannot be applied to two-tiered WSNs where two kinds of nodes are deployed. Therefore, we introduce two Lloyd-based algorithms to solve the optimal deployment problem in two-tiered WSNs.
\subsection{One-tiered Lloyd Algorithm}
OTL Algorithm combines two independent Lloyd Algorithms. Using the Lloyd algorithm, an $M$-level regular quantizer is designed and its reproduction points are used as $Q$. Another $N$-level regular quantizer is designed and its partition is used as $\R^{A}$. The index map is determined by $T(n)=\arg\min_{m}\|p'_n-q_m\|$ and the deployment $P$ is determined by $p_n=\min_{m}\frac{p'_n+\beta q_m}{1+\beta}, n\in \mathcal{I}_{A}$, where $p'_n$ is the $n^{th}$ reproduction point obtained by the $N$-level quantizer.  Using Proposition \ref{Transformation}, it is easy to show that, for the networks with one BS, the distortion of OTL Algorithm converges to the minimum as long as the second Lloyd Algorithm provides the optimal $N$-level quantizer.

OTL Algorithm combines two independent Lloyd Algorithms. The two Lloyd Algorithms are, respectively, used to implement an $M$-level regular quantizer and an $N$-level regular quantizer. The $\R^{A}$ is determined as the partition of the second quantizer, and $Q$ is determined by the reproduction points of the first quantizer. The index map is determined by $T(n)=\arg\min_{m}\|p'_n-q_m\|$ and the deployment $P$ is determined by $p_n=\min_{m}\frac{p'_n+\beta q_m}{1+\beta}, n\in \mathcal{I}_{A}$, where $p'_n$ is the $n^{th}$ reproduction point obtained by the second quantizer.  Using Proposition \ref{Transformation}, it is easy to show that, for the networks with $1$ BS, the distortion of OTL Algorithm converges to the minimum as long as the second Lloyd Algorithm provides the optimal $N$-level quantizer.
\subsection{Two-tiered Lloyd Algorithm}
Before we introduce the details of TTL Algorithm, we introduce two concepts: (i) AP local distortion, (ii) BS local distortion. The AP local distortion is defined as
\begin{align*}
D_n^{A}(P,Q,\!\R^{A}\!,\!T)\!\!=\!\!\!\!\int_{R_n^{A}}\!\!\!\left[\|p_n\!-\!w\|^{2}\!+\!\beta\|p_n\!-\!q_{T(n)}\|^{2}\right]\!f(w)dw.
\end{align*}
The total distortion is the summation of these AP local distortions, i.e., $D(P,Q,\R^{A},T)\!=\!\sum_{n=1}^ND^{A}_n(P,Q,\R^{A},T)$.
Similarly, for $\mathcal{N}_m \triangleq \{n: T(n) = m\}$,  the BS local distortion is defined as
\begin{align*}\small
D_m^{B}(P,Q,\!\R^{A},\!T\!)\!\!=\!\!\!\!\sum_{n\in\mathcal{N}_m}\!\!\!\int_{\!R_n^{A}}\!\!\!\!\!\left(\|p_n\!-\!w\|^{2}\!+\!\beta\|p_n\!-\!q_{m}\|^{2}\right)\!\!f(\!w\!)dw.
\end{align*}
The total distortion is the sum of the BS local distortions, i.e., $D(P,Q,\R^{A},T)=\sum_{m=1}^MD^{B}_m(P,Q,\R^{A},T)$. Let $c_n$ and $v_n$ be, respectively, the geometric centroid and the volume of the current AP cell partition.
Now, we provide the details of TTL Algorithm.
The TTL algorithm iterates over four steps: (i) AP $n$ moves to $\frac{c_n+\beta q_{T(n)}}{1+\beta}$; (ii) AP partitioning is done by assigning the corresponding EVD to each AP node; (iii) BS $m$ moves to $\frac{\sum_{n\in\mathcal{N}_m}p_nv_n}{\sum_{n\in\mathcal{N}_m}v_n}$; (iv) Clustering is done by assigning the nearest BS to each AP.

In what follows, we show that the distortion of TTL Algorithm converges.
First, due to the parallel axis theorem and (\ref{eq}), the local distortion of AP $n$ can be rewritten as
\begin{equation}
D_n^{A}(P,Q,\!\R^{A},\!T\!)=\frac{1}{1+\beta}\int_{R^{A}_n}\|c_n-w\|^{2}f(w)dw+(1+\beta)\|p_n-\widehat p_n\|^2v_n+\frac{\beta}{1\!+\!\beta}\int_{R^{A}_n}\!\!\!\|w\!-\!q_{T(n)}\|^{2}f(w)dw,
\label{APlocalcost}
\end{equation}
where $\widehat p_n=\frac{c_n+\beta q_{T(n)}}{1+\beta}$. When $Q$, $\R^{A}$ and $T$ are given, the first term and the third term of (\ref{APlocalcost}) are constants. In other words, the AP local distortion becomes a function of $\|p_n-\widehat p_n\|$. Therefore, Step (i) does not increase the AP local distortions and then the total distortion.
Second, given $P$, $Q$ and $T$, EVDs minimize the total distortion, indicating that the total cost is not increased by Step (ii).
Third, observe that for $\widehat q_m=\frac{\sum_{n\in\mathcal{N}_m}p_nv_n}{\sum_{n\in\mathcal{N}_m}v_n}$, we have $\sum_{n\in\mathcal{N}_m}v_n\|p_n-q_m\|^2 = \sum_{n\in\mathcal{N}_m}v_n\left[\|p_n-\widehat q_m\|^2+\|q_m-\widehat q_m\|^2\right]$.
Therefore, the local distortion of BS $m$ can be rewritten as
\begin{equation}
D^{B}_n(P,Q,\!\R^{A},\!T)=\sum_{n\in\mathcal{N}_m}\int_{R_n^{A}}\|p_n-w\|^{2}f(w)dw+ \beta\left(\sum_{n\in\mathcal{N}_m}v_n\right)\|q_m-\widehat q_m\|^2+\beta\sum_{n\in\mathcal{N}_m}\left(v_n\|p_n-\widehat q_m\|^2\right).
\label{BSlocalcost}
\end{equation}
When $P$, $\R^{A}$ and $T$ are given, the first term and the third term in (\ref{BSlocalcost}) are constants. In other words, the BS local distortion becomes a function of $\|q_m-\widehat q_m\|$. Therefore, Step (iii) does not increase the BS local distortions and then the total distortion.
Last, given $P$, $Q$ and $\R^{A}$, $T^*(n)=\arg\min_{m}\|p_n-q_m\|$ minimizes the total distortion, indicating that the total distortion is not increased by Step (iv).
In other words, the algorithm generates  is a positive non-increasing sequence of distortion values and therefore will converge.

\section{Performance Evaluation}\label{sec:simulation}
We provide the simulation results in two two-tiered WSNs: (i) WSN1: A two-tiered WSN including 1 BS and 20 APs; (ii) WSN2: A two-tiered WSN including 4 BSs and 20 APs. The target region is set to $\Omega=[0,10]^2$ and $\beta$ is set to $1$. The traffic density function is the sum of five Gaussian functions of the form $5exp(0.5(-(x-x_{center})^2-(y-y_{center})^2))$, where centers $(x_{center},y_{center})$ are (8,1), (4,9), (7.6,7.6), (9.4,5) and (2,2). Totally, we generate 50 initial AP and BS deployments on $\Omega$ randomly, i.e., every node location is generated with uniform distribution on $\Omega$. For each initial AP and BS deployments, we connect every AP to its closest BS and then assign the corresponding EVD to the AP node. The maximum number of iterations is set to 100. BSs and APs are denoted, respectively, by colored five-pointed stars and colored circles. The corresponding geometric centroid of AP cells are denoted by colored asterisks. Each BS and its connected APs form a cluster. To make clusters more visible, the symbols in the same cluster are filled with the same color.

Figs. \ref{InitialDeploymentInWSN1}, \ref{OTLDeploymentInWSN1}, and \ref{TTLDeploymentInWSN1} show one example of the initial and the final deployments of the two algorithms, OTL Algorithm and TTL Algorithm, in WSN1. For the initial deployment in Fig. \ref{InitialDeploymentInWSN1}, 12 APs have non-empty cells and make contributions to the data collection. After running the OTL and TTL algorithms, all APs have non-empty cells.
Compared to the random node placement with the corresponding optimal AP partitioning and clustering, the OTL and TTL algorithms save, respectively, 57.00\% and 56.78\% of the weighted power.
{
\begin{figure}[!htb]
\centering
\subfloat[]{\includegraphics[width=2.1in]{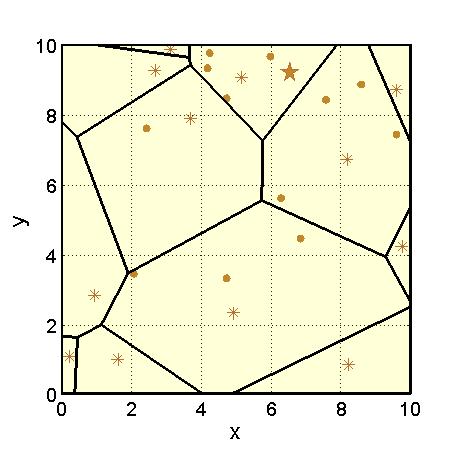}
\label{InitialDeploymentInWSN1}}
\hfil
\subfloat[]{\includegraphics[width=2.1in]{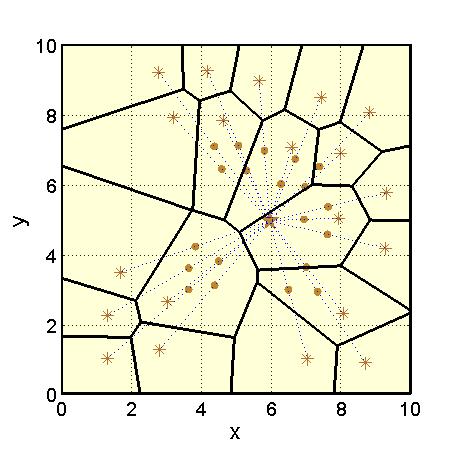}
\label{OTLDeploymentInWSN1}}
\hfil
\subfloat[]{\includegraphics[width=2.1in]{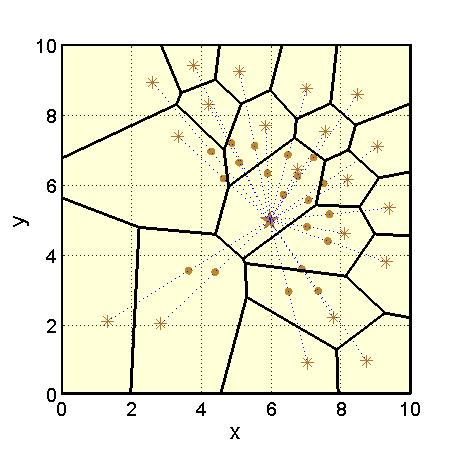}
\label{TTLDeploymentInWSN1}}
\captionsetup{justification=justified}
\caption{AP and BS deployments in WSN1. (a) The initial deployment and partitions. (b) The final deployment and  partitions of OTL Algorithm. (c) The final deployment and partitions of TTL Algorithm.}
\label{PerformanceInWSN1}
\end{figure}}

Figs. \ref{InitialDeploymentInWSN2}, \ref{OTLDeploymentInWSN2}, and \ref{TTLDeploymentInWSN2} show similar results for WSN2. For the initial deployment in Fig. \ref{InitialDeploymentInWSN2}, 16 APs have non-empty cells and make contributions to the data collection. Compared to the random node placement with the corresponding optimal AP partitioning and clustering, the OTL and TTL algorithms save, respectively, 79.46\% and 80.19\% of the weighted power.
\begin{figure}[!htb]
\centering
\subfloat[]{\includegraphics[width=2.1in]{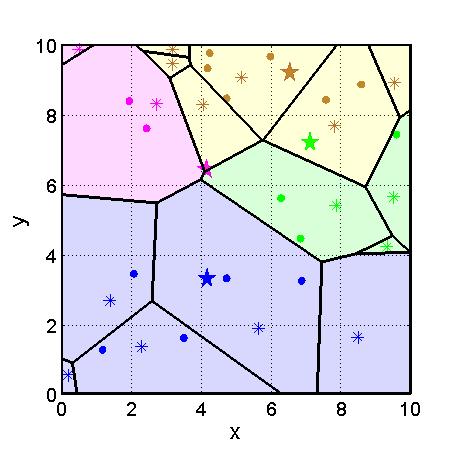}
\label{InitialDeploymentInWSN2}}
\hfil
\subfloat[]{\includegraphics[width=2.1in]{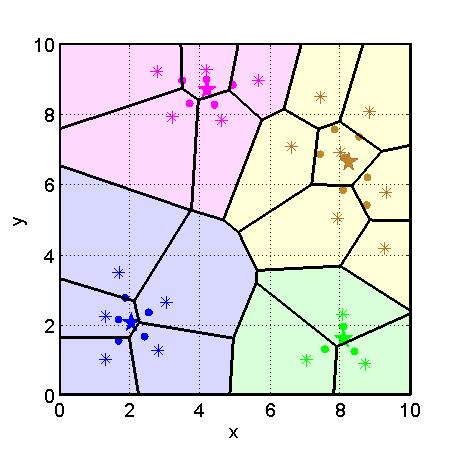}
\label{OTLDeploymentInWSN2}}
\hfil
\subfloat[]{\includegraphics[width=2.1in]{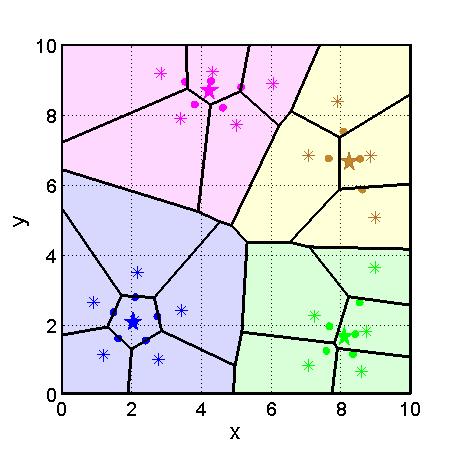}
\label{TTLDeploymentInWSN2}}
\captionsetup{justification=justified}
\caption{AP and BS deployments in WSN2. (a) The initial deployment and partitions. (b) The final deployment and partitions of OTL Algorithm. (c) The final deployment and partitions of TTL Algorithm.}
\label{DeploymentInWSN2}
\end{figure}

As discussed in Section \ref{sec:model}, the distortion is the weighted power. For each randomly generated initial deployment, we calculate the initial weighted power and the weighted powers after running the two algorithms starting with these initial deployments. With these simulation results, we calculate the percentage of saved power for each initial deployment and then the averaged percentage of saved power over 50 initial deployments for each algorithm. Our statistical results show that, on average, OTL algorithm saves, respectively, $53.61\%$ and $79.29\%$ of the weighted power in WSN1 and WSN2. Similarly, on average, TTL algorithm saves, respectively, $53.71\%$ and $79.16\%$ of the weighted power in WSN1 and WSN2.

\section{Conclusions}\label{sec:conclusion}
A two-tiered wireless sensor network which collects data from a large-scale wireless sensor network to base stations through access points is discussed in this paper. We studied the energy consumption on such two-tiered WSNs and provided the corresponding objective function. Different from one-tiered WSN, both the AP deployment and the BS deployment are taken into consideration.
The necessary condition for optimal deployment implies that every AP location
should be deployed between the centroid of its cell
and its associated BS. By defining an appropriate distortion measure, we
proposed one-tiered Lloyd (OTL) and two-tiered Lloyd (TTL) algorithms to minimize the distortion.
Our simulation results show that OTL and TTL algorithms greatly save the weighted power or energy in a two-tiered WSNs.

\appendices
\section{Proof of Proposition 2}\label{appendix1}
Let $(P^*,Q^*,\R^{A*},T^*)$ be an optimal solution for (\ref{cost2}).
As we show at the beginning of Sec. \ref{sec:opt2}, given the optimal deployment $(P^*,Q^*)$, the optimal partition and the optimal index map are, respectively, $\R^{A*}=V^E(P^*,Q^*)$ and $T^*=\T_{[P^*,Q^*]}$.
Thus, the optimal geometric centroid and the optimal (Lebesgue) measure of $\R^{A*}$ can be represented as $c_n(P^*,Q^*)$ and $v_n(P^*,Q^*)$, where $c_n(P,Q)=\frac{\int_{V^E(P,Q)}wf(w)dw}{\int_{V^E(P,Q)}f(w)dw}$ and $v_n(P,Q)=\int_{V^E(P,Q)}f(w)dw$.
According to the parallel axis theorem, given the optimal partition $V^E(P^*,Q^*)$ and the optimal index map $\T_{[P^*,Q^*]}$, the objective function in (\ref{cost2}) can be expressed as
\begin{equation}
\begin{aligned}
D(\!P,\!Q,\!V\!^E\!\!(\!P^*\!,\!Q^*\!),\!\T_{[P^*\!\!,Q^*\!]})
&{=}\!\sum_{n=1}^{N}\!\left[\!\int_{V^E_n\!(P^*\!\!,Q^*\!)}\!\!\!\!\!\!\!\!\!\!\!\!\!\!\|c_n(\!P^*\!\!,Q^*\!)\!-\!w\|^{2}f(w)dw\!+\!\|p_n\!-\!c_n(\!P^*\!\!,Q^*\!)\!\|^2\!v_n(\!P^*\!\!,Q^*\!)\!+\!\beta\|p_n\!-\!q_{\T_{[\!P^*\!\!,Q^*\!]}}\!\|^{2}\!v_n(\!P^*\!\!,Q^*\!)\!\right].
\label{cost3}
\end{aligned}
\end{equation}
The partial derivatives of (\ref{cost3}) are
\begin{equation}
\frac{\partial D(P,Q,V^E(P^*,Q^*),\T_{[P^*,Q^*]})}{\partial p_n} = 2\left[(p_n-c_n(P^*,Q^*))+\beta(p_n-q_{\T_{[P^*,Q^*]}(n)})\right]v_n(P^*,Q^*), n\in\mathcal{I}_{A},
\end{equation}
and
\begin{equation}
\frac{\partial D(P,Q,V^E(P^*,Q^*),\T_{[P^*,Q^*]})}{\partial q_m} = \sum_{n:\T_{[P^*,Q^*]}(n)=m}2\beta(q_{m}-p_n)v_n(P^*,Q^*), m\in\mathcal{I}_{B}, .
\end{equation}
Since (\ref{cost3}) is a convex function of $P$ and $Q$, the optimal deployment $(P^*,Q^*)$ is unique and satisfies\\
$\frac{\partial D(P,Q,V^E(P^*,Q^*),\T_{[P^*,Q^*]})}{\partial p_n}|_{(P^*, Q^*)}=0$ and $\frac{\partial D(P,Q,V^E(P^*,Q^*),\T_{[P^*,Q^*]})}{\partial q_m}|_{(P^*, Q^*)}=0$.
Solving for $p_n^{*}$ and $q_m^{*}$, we obtain
\begin{equation}
p^*_n = \frac{c_n(P^*,Q^*)^*+\beta q^*_{\T_{[P^*,Q^*]}(n)}}{1+\beta}, n\in\mathcal{I}_{A}
\label{res1}
\end{equation}
and
\begin{equation}
q^*_m = \frac{\sum_{n:\T_{[P^*,Q^*]}(n)=m}p^*_nv_n(P^*,Q^*)}{\sum_{n:\T_{[P^*,Q^*]}(n)=m}v_n(P^*,Q^*)}, m\in\mathcal{I}_{B}
\label{res2}
\end{equation}
Substituting (\ref{res1}) to (\ref{res2}), we have
\begin{equation}
q_m^* =\frac{\sum_{n:\T_{[P^*,Q^*]}(n)=m}c_n(P^*,Q^*)v_n(P^*,Q^*)}{\sum_{n:\T_{[P^*,Q^*]}(n)=m}v_n(P^*,Q^*)}, m\in\mathcal{I}_{B}.
\label{res3}
\end{equation}

\section{Proof of Theorem 1}\label{appendix2}
{\color{black}
Before we discuss the best possible distortion in the uniformly distributed 1-dimensional space, we need to present the following concepts and lemmas.
Let $\mu(W)$ be the (Lebesgue) measure of the set $W$.
Let $d(N,\Omega)=\min_{x_1,\ldots,x_N}\int_{\Omega}\min_{n}\|x_n-w\|^2\frac{dw}{\mu(\Omega)}$ be the minimum distortion of the regular $N$-level  quantizer for a uniform distribution on $\Omega \subset\mathbb{R}$.
\begin{lemma}
We have $d(N,\Omega)\ge \frac{\mu(\Omega)^2}{12N^2}$ with equality if and only if $\Omega$ is the union of $N$ disjoint intervals, each with measure $\frac{\mu(\Omega)}{N}$.
\label{dna}
\end{lemma}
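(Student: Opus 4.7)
\textbf{Proof proposal for Lemma \ref{dna}.} My approach is a standard cell-by-cell argument: I would lower bound each cell's contribution to the distortion by a one-dimensional isoperimetric-type inequality and then aggregate the cell bounds via convexity. Fix any $N$ reproduction points $x_1,\ldots,x_N$ and let $R_1,\ldots,R_N$ denote the induced Voronoi cells intersected with $\Omega$, so that the $R_n$ are measurable, essentially disjoint, and cover $\Omega$. By the parallel axis theorem, replacing each $x_n$ by the centroid $c_n=\mu(R_n)^{-1}\int_{R_n}w\,dw$ can only decrease the contribution $\int_{R_n}(x_n-w)^2\,dw$, so it suffices to lower bound
\[
\frac{1}{\mu(\Omega)}\sum_{n=1}^N\int_{R_n}(w-c_n)^2\,dw.
\]

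The heart of the proof is the following single-cell inequality: for every measurable $R\subseteq\mathbb{R}$ with $r:=\mu(R)<\infty$ and centroid $c$,
\[
\int_R (w-c)^2\,dw\;\ge\;\frac{r^3}{12},
\]
with equality if and only if $R$ equals the interval $[c-r/2,\,c+r/2]$ up to a null set. I would prove this by a measure-preserving swap. Translate so that $c=0$ and let $R^\star=[-r/2,r/2]$; since $\mu(R)=\mu(R^\star)=r$, we have $\mu(R\setminus R^\star)=\mu(R^\star\setminus R)$. Writing
\[
\int_R w^2\,dw-\int_{R^\star} w^2\,dw=\int_{R\setminus R^\star} w^2\,dw-\int_{R^\star\setminus R} w^2\,dw,
\]
the first integrand is at least $r^2/4$ and the second is at most $r^2/4$, so the right-hand side is nonnegative; a direct computation gives $\int_{R^\star}w^2\,dw=r^3/12$, which proves the inequality. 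Equality forces $\mu(R\setminus R^\star)=0$, i.e.\ $R$ is (up to a null set) an interval of length $r$.

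Substituting the single-cell bound gives
\[
d(N,\Omega)\;\ge\;\frac{1}{12\,\mu(\Omega)}\sum_{n=1}^N\mu(R_n)^3.
\]
Since $t\mapsto t^3$ is convex on $[0,\infty)$ and $\sum_n\mu(R_n)=\mu(\Omega)$, Jensen's inequality (equivalently, the power-mean inequality) yields $\sum_n\mu(R_n)^3\ge\mu(\Omega)^3/N^2$, with equality iff all $\mu(R_n)$ equal $\mu(\Omega)/N$. Chaining this with the cell bound gives $d(N,\Omega)\ge\mu(\Omega)^2/(12N^2)$, and combining the two equality conditions forces each $R_n$ to be an interval of length $\mu(\Omega)/N$, so $\Omega=\bigcup_n R_n$ is a disjoint union of $N$ such intervals, as claimed.

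The main obstacle is the single-cell inequality, because a Voronoi cell intersected with $\Omega$ need not be an interval when $\Omega$ is disconnected (the cell can split into several pieces along the gaps of $\Omega$). The rearrangement swap above sidesteps the need for connectivity, but the equality analysis must be done carefully to deduce that $R$ is literally (a.e.\ equal to) an interval rather than merely having the second moment of one. The remaining ingredients---the parallel axis theorem and Jensen's inequality---are routine.
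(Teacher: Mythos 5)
Your proposal is correct and follows essentially the same route as the paper: decompose the distortion cell by cell, apply the single-cell bound $d(1,A)\ge\mu(A)^2/12$ with equality iff $A$ is an interval, and aggregate via $\sum_n\mu(R_n)^3\ge\mu(\Omega)^3/N^2$ (your Jensen step is the paper's reverse H\"{o}lder step in disguise). The only difference is cosmetic: the paper cites Conway--Sloane for the single-cell inequality, whereas you prove it directly with a measure-preserving swap, which makes the argument self-contained but does not change its structure.
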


\begin{proof}
Let $p_1,\dots,p_N\in\Omega$, and $R_1,\ldots,R_N\subset\mathbb{R}$ respectively denote the reproduction points and the quantization cells of the optimal regular quantizer that achieves $d(N, \Omega)$. Note that $p_n$ is the centroid of $R_n$. We have
\begin{align}
\label{lemma1eqs1} d(N,\Omega) & = \sum_{n=1}^N \int_{R_n} \|p_n-w\|^2\frac{dw}{\mu(\Omega)} \\ \label{lemma1eqs2}  & = \sum_{n=1}^{N}\frac{\mu(R_n)}{\mu(\Omega)}d(1,R_n)\\
 \label{lemma1eqs3}  & \ge \sum_{n=1}^{N}\frac{1}{\mu(\Omega)}\frac{\mu(R_n)^3}{12}\\
\label{lemma1eqs4}  &\ge \frac{1}{12\mu(\Omega)}\left(\sum_{n=1}^{N}\mu(R_n)\right)^3N^{-2}\\
\label{lemma1eqs5}  &=\frac{\mu(\Omega)^2}{12N^2},
\end{align}
where (\ref{lemma1eqs2}) follows since for any $p_n\in\mathbb{R}$, we have $d(1,R_n) =  \int_{R_n} \|p_n-w\|^2\frac{dw}{\mu(R_n)}$ by definition, and (\ref{lemma1eqs3}) follows since for any $A\subset\mathbb{R}$, we have
$d(1,A)\ge \frac{\mu(A)^2}{12}$ with equality if and only if $A$ is an interval \cite{CS}. Also, (\ref{lemma1eqs4})  is the reverse H\"{o}lder's inequality, and (\ref{lemma1eqs5})  follows since $\sum_{n=1}^{N}\mu(R_n)=\mu(\Omega)$. Note that (\ref{lemma1eqs3})  is an equality if and only if $\Omega_n$s are intervals, and (\ref{lemma1eqs4})  is an equality if and only if $\mu(\Omega_n)=\frac{\mu(\Omega)}{N}, \forall n$. Therefore, (\ref{lemma1eqs5}) can be achieved if and only if $\Omega$ is the union of disjoint intervals with the same measure $\frac{\mu(\Omega)}{N}$.
\end{proof}

\begin{lemma}
Let $N$ and $M$ be two positive integers such that $N\ge M$. We define a function $D^{LB}(e_1,\dots,e_M)=\left(\sum_{m=1}^{M}\left(\beta+\frac{1}{e_m^2}\right)^{-\frac{1}{2}}\right)^{-2}$ with the domain $R_c=\{(e_1,\dots,e_M)|\sum_{m=1}^Me_m=N,e_m\in\mathbb{N},\forall m\}$, where $\beta$ is a non-negative constant. Let $M_a =(N\mbox{ mod } M)$ and $M_b = M - M_a$.
Then, $D^{LB}(e_1,\dots,e_M)$ attains the unique minimum
\begin{equation}\small
\left(M_a\left(\beta+\frac{1}{\lceil\frac{N}{M}\rceil^2}\right)^{-\frac{1}{2}}+M_b\left(\beta+\frac{1}{\lfloor\frac{N}{M}\rfloor^2}\right)^{-\frac{1}{2}}\right)^{-2}
\label{min}
\end{equation}
where $M_a$ of the $e_m$s are equal to $\lceil\frac{N}{M}\rceil$ and $M_b$ of the $e_m$s are equal to $\lfloor\frac{N}{M}\rfloor$. In particular, when $M_a=0$, $D^{LB}(e_1,\dots,e_M)$ attains the unique minimum $\left(\frac{\beta}{M^2}+\frac{1}{N^2}\right)$ at $\left(\frac{N}{M},\dots,\frac{N}{M}\right)$.
\label{DLB}
\end{lemma}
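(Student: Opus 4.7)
The plan is to reduce Lemma 2 to a standard concavity/smoothing argument on integer compositions. Since $D^{LB}(e_1,\ldots,e_M) = S^{-2}$ with $S(e_1,\ldots,e_M) := \sum_{m=1}^M g(e_m)$ and $g(e) := (\beta+e^{-2})^{-1/2} = e/\sqrt{1+\beta e^2}$, and since $S>0$ on $R_c$, minimizing $D^{LB}$ is equivalent to maximizing $S$ over $R_c$. First I would verify strict concavity of $g$ on $(0,\infty)$ (for $\beta>0$) by direct differentiation: one finds $g'(e)=(1+\beta e^2)^{-3/2}$ and $g''(e)=-3\beta e(1+\beta e^2)^{-5/2}<0$.

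Next I would apply the standard ``smoothing'' exchange argument. Suppose for contradiction that an optimal allocation $(e_1,\ldots,e_M)\in R_c$ contains two coordinates with $e_i\ge e_j+2$, and replace the pair $(e_i,e_j)$ by $(e_i-1,e_j+1)$; the sum is preserved, so the new tuple is in $R_c$. Writing $e_j+1 = \lambda e_j+(1-\lambda)e_i$ and $e_i-1 = (1-\lambda)e_j+\lambda e_i$ with $\lambda = 1 - 1/(e_i-e_j) \in [1/2,1)$, strict concavity of $g$ yields $g(e_j+1)+g(e_i-1) > g(e_j)+g(e_i)$, strictly increasing $S$ and contradicting optimality. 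Hence every maximizer satisfies $|e_i-e_j|\le 1$ for all $i,j$, and together with $\sum_m e_m=N$ this forces exactly $M_a = N\bmod M$ of the $e_m$ to equal $\lceil N/M\rceil$ and the remaining $M_b=M-M_a$ to equal $\lfloor N/M\rfloor$; the resulting multiset is unique.

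Substituting this balanced allocation into $S$ gives $S_{\max}=M_a(\beta+\lceil N/M\rceil^{-2})^{-1/2}+M_b(\beta+\lfloor N/M\rfloor^{-2})^{-1/2}$, and hence $D^{LB}_{\min}=S_{\max}^{-2}$, which matches (\ref{min}). For the particular case $M_a=0$ (i.e., $M\mid N$), every $e_m$ equals $N/M$, giving $S_{\max}=M(\beta+M^2/N^2)^{-1/2}$ and therefore $D^{LB}_{\min}=\beta/M^2+1/N^2$. No step is genuinely delicate; the only thing to be careful about is the strictness of the concavity inequality in the exchange step, which requires $(\lambda,1-\lambda)\in(0,1)^2$ and therefore holds precisely when $e_i\ge e_j+2$, i.e., exactly when the allocation is not already balanced. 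One boundary subtlety worth flagging is $\beta=0$: then $g$ is affine, $S\equiv N$ on $R_c$, and uniqueness fails — so the lemma's uniqueness claim is meaningful only for $\beta>0$.
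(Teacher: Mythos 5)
Your proof is correct and follows essentially the same route as the paper's: both reduce the problem to maximizing $\sum_{m}\left(\beta+e_m^{-2}\right)^{-1/2}$ over $R_c$ and run an exchange argument showing that replacing $(e_i,e_j)$ by $(e_i-1,e_j+1)$ strictly increases this sum whenever $e_i\ge e_j+2$, which forces the balanced allocation with $M_a$ entries equal to $\lceil N/M\rceil$ and $M_b$ equal to $\lfloor N/M\rfloor$. You obtain the exchange inequality from strict concavity of $g(e)=e/\sqrt{1+\beta e^2}$, whereas the paper shows monotonicity in the gap $\delta$ of its one-parameter family $\widetilde D^{UB}_{ij}(\delta)$ at fixed pair-sum; these are equivalent (the same derivative $(1+\beta e^2)^{-3/2}$ drives both), and your remark that uniqueness of the minimizer fails at $\beta=0$ is a valid boundary observation the paper does not make.
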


\begin{proof}
Let $\mathbf{e}=(e_1,\dots,e_M)\in R_c$, $D^{UB}(\mathbf{e})=\sum_{m=1}^{M}\left(\beta+\frac{1}{e_m^2}\right)^{-\frac{1}{2}}$. Minimizing $D^{LB}$ is equivalent to maximizing $D^{UB}$. Let $i,j\in\{1,\ldots,M\}$ be two arbitrary indices. Without loss of generality, suppose $e_i\ge e_j$. Let $\delta=e_i-e_j$ and $\zeta=e_i+e_j$.
We have
\begin{equation}\small
\widetilde D^{UB}_{ij}\!(\delta)\!=\!D^{UB}\!\!\left(e_1,\dots,\frac{\zeta+\delta}{2},\dots,\frac{\zeta-\delta}{2},\dots,e_m\right)\!\!=\!\!\sum\limits_{k\ne i,j}\left(\beta\!+\!\frac{1}{e_k^2}\right)^{\!\!\!-\frac{1}{2}}\!\!\!\!\!+\left(\beta\!+\!\frac{1}{\left(\frac{\zeta+\delta}{2}\right)^2}\right)^{\!\!\!-\frac{1}{2}}\!\!\!\!\!+ \left(\beta\!+\!\frac{1}{\left(\frac{\zeta-\delta}{2}\right)^2}\right)^{\!\!\!-\frac{1}{2}},
\label{delta}
\end{equation}
and therefore,
\begin{equation}\small
\frac{\partial \widetilde D^{UB}_{ij}(\delta)}{\partial\delta}=\frac{1}{2}\left[\left(1+\beta\left(\frac{\zeta+\delta}{2}\right)^2\right)^{-\frac{3}{2}}-\left(1+\beta\left(\frac{\zeta-\delta}{2}\right)^2\right)^{-\frac{3}{2}}\right].
\end{equation}
Let $g(y)=\frac{1}{2}y^{-\frac{3}{2}}, y\in(0,\infty)$, $y(x)=1+x^2$, $x\in[0,\infty)$. Since $e_i$ and $e_j$ are non-negative and $e_i\ge e_j$, we have $\delta\ge0$, $\zeta\ge0$, and then $x_1=\frac{\zeta+\delta}{2}\ge\frac{\zeta-\delta}{2}=x_2$. Consequently, $y(x_1)\ge y(x_2)>0$, and thus, $\frac{\partial\widetilde D^{UB}(\delta)}{\partial\delta}=g(y(x_1))-g(y(x_2))\le0$ with equality if and only if $\delta=0$. Therefore, $\widetilde D^{UB}_{ij}(\delta)$ is a decreasing function for non-negative continuous $\delta$.

Now, let $\mathbf{e}^*=(e^*_1,\dots,e^*_M)\triangleq\arg\min_{(e_1,\dots,e_M){\in R_c}}D^{LB}(e_1,\dots,e_M)$ be a minimizer of $D^{LB}$ on $R_c$, and $\hat \delta\triangleq\min_{i\ne j}|e^*_i-e^*_j|$ be the minimum difference among $e^*_m$s. Since $e^*_i$s are positive integers, we have $\hat \delta\in\mathbb{N}$.
In what follows, we show that $\hat \delta\in\{0,1\}$. Suppose $\hat \delta\ge 2$. Then, we can find two indices $i,j\in\{1,\ldots,M\}$ such that $\delta=e^*_i-e^*_j\ge 2$. Let $\mathbf{e}'=(e^*_1,\dots,e'_i,\dots,e'_j,\dots,e^*_M)$ be a new solution where $e'_i=e^*_i-1$, and $e'_j=e^*_j+1$. We have $\delta'=e'_i-e'_j=e^*_i-e^*_j-2=\delta-2<\delta$.  Since $\widetilde D^{UB}_{ij}(\delta)$ is a monotonically decreasing function for non-negative continuous $\delta$, we have $\widetilde D^{UB}_{ij}(\delta')>\widetilde D^{UB}_{ij}(\delta)$ where $\delta'$ and $\delta$ are non-negative integers. Thus,  we have $D^{UB}(\mathbf{e}')>D^{UB}(\mathbf{e}^*)$ which contradicts the optimality of $\mathbf{e}^{*}$.

Therefore, $\hat \delta\in\{0,1\}$, and $e^*_m$s can thus assume at most 2 distinct values. Suppose $M_1$ of the $e^*_m$s are equal to $h$ and $M_2$ of the $e^*_m$s are equal to $h+1$, where $h\geq 0$ is an integer.
It is self-evident that at least one of $M_1$ or $M_2$ should be positive.
Without loss of generality, suppose $M_1>0$ and $M_2\ge0$.
Since $M_1+M_2=M$ and $M_1>0$, we have $0<M_1\le M$ and $0\le M_2<M$.
From the equalities $M_1+M_2=M$ and $M_1h+M_2(h+1)=N$, we obtain $Mh+M_2=N$. Solving the system $Mh+M_2=N$ and $0\le M_2<M$, we have $h=\lfloor\frac{N}{M}\rfloor$ and $M_2=N\mod M=M_a$. Finally, using the equality $M_1+M_2=M$, we can determine $M_1=M-(N\mod M)=M_b$.
\end{proof}
}
Now, we have enough tools to derive the best possible distortion in the uniformly distributed 1-dimensional space.
{\color{black}
Let $\mathcal{N}_m=\{n|T(n)=m\}$ be the set of APs connected to the $m^{th}$ BS, and $N_m$ be the number of elements in $\mathcal{N}_m$. Let $W_m=\bigcup_{n\in \mathcal{N}_m}R^{A}_n$ be the $m^{th}$ cluster region.
We have
\begin{align}
D(P,Q,\R^{A},T)
=&\sum_{m=1}^M\sum\limits_{n\in \mathcal{N}_m}\int_{R^{A}_n}\left(\|p_n-w\|^2+\beta\|p_n-q_m\|^2\right)dw\\
=&\sum_{m=1}^M\sum\limits_{n\in \mathcal{N}_m}\int_{R^{A}_n}\left(\frac{1}{1+\beta}\|(1+\beta)p_n-\beta q-w\|^2+\frac{\beta}{1+\beta}\|w-q\|^2\right)dw\\
\ge&\sum_{m=1}^M\left[\frac{1}{1+\beta}d(N_m,W_m)+\frac{\beta}{1+\beta}d(1,W_m)\right]\frac{\mu(W_m)}{\mu(\Omega)}\label{LB0}\\
\ge&\frac{1}{12(1+\beta)\mu(\Omega)}\sum_{m=1}^M\mu^3(W_m)\left(\beta+\frac{1}{N_m^2}\right)\label{LB1}\\
\ge&\frac{1}{12(1+\beta)\mu(\Omega)}\left(\sum_{m=1}^{M}\mu(W_m)\right)^{3}\left(\sum_{m=1}^{M}\left(\beta+\frac{1}{N_m^2}\right)^{-\frac{1}{2}}\right)^{-2}\label{LB2}\\
=&\frac{\mu^2(\Omega)}{12(1+\beta)}\left(\sum_{m=1}^{M}\left(\beta+\frac{1}{N_m^2}\right)^{-\frac{1}{2}}\right)^{-2}\label{LB3}\\
\ge&\frac{\mu^2(\Omega)}{12(1+\beta)}\min_{\begin{subarray}{c}N_1,\dots,N_M\in\mathcal{N}\\\sum_{m=1}^MN_m=N\end{subarray}}\left(\sum_{m=1}^{M}\left(\beta+\frac{1}{N_m^2}\right)^{-\frac{1}{2}}\right)^{-2}\label{LB4}
\end{align}
where the first equality follows from (\ref{eq}), the first inequality follows from the definition of $d(N,\Omega)$, the second inequality follows from Lemma \ref{dna}, and the third inequality is the reverse H\"{o}lder's inequality. All these inequalities can be made tight with a specific choice of $p_n$, $q_m$, $W_m$s and $N_m$s. In fact, by Proposition \ref{prop1}, (\ref{LB0}) is an equality if and only if $p_n=\frac{c_n+\beta q_{T(n)}}{1+\beta}$ and $q_m$ is the centroid of $W_m$, indicating \ref{III} and \ref{V} in Theorem \ref{2dcost}.
Also, according to Lemma \ref{dna}, (\ref{LB1}) is an equality if and only if $W_m, m\in \mathcal{I}_{\mathcal B},$ are intervals, and $W_m$ is uniformly divided into $N_m$ intervals. Therefore, \ref{IV} in Theorem \ref{2dcost} is proved. According to the reverse H\"{o}lder's inequality, (\ref{LB2}) is an equality if and only if $\exists\tau>0, \mu(W_m)=\tau\left(\beta+\frac{1}{N_m^2}\right)^{-\frac{1}{2}}, \forall m\in \mathcal{I}_{\mathcal B}$. Moreover, the sum of these measures is $\mu(\Omega)$, i.e., $\sum_{j=1}^{M}\mu(W_j)=\mu(\Omega)$. Therefore, the corresponding measure of the $m^{th}$ cluster region is
\begin{equation}\small
\mu(W_m)=\frac{N_m\left(1+\beta N_m^2\right)^{-\frac{1}{2}}\mu(\Omega)}{\sum_{j=1}^{M}N_j\left(1+\beta N_j^2\right)^{-\frac{1}{2}}}.
\label{Wm}
\end{equation}
Note that (\ref{LB3}) is a function of $(N_1,\dots,N_M)$ and (\ref{LB4}) is just the minimum of (\ref{LB3}). Therefore, the last inequality is an equality when we properly select the variables $N_1,\dots,N_M$. Obviously, the above equality conditions are compatible, i.e., all equality conditions can be satisfied simultaneously. Therefore, (\ref{LB4}) is an achievable lower bound, indicating the minimum distortion. The last thing is to determine the optimal $(N_1,\dots,N_M)$ that attains the minimum of (\ref{LB3}).
By Lemma \ref{DLB}, (\ref{LB3}) attains (\ref{LB4}), if and only if $M_a$ of the $N_m$s are equal to $\lceil\frac{N}{M}\rceil$ and $M_b$ of the $N_m$s are equal to $\lfloor\frac{N}{M}\rfloor$.
Substituting the optimal values for $N_ms$ to (\ref{LB4}), we obtain the minimum distortion formula in (\ref{optD}).
Substituting the optimal values for $N_m$s to (\ref{Wm}), we get \ref{I} and \ref{II} in Theorem \ref{2dcost}.
}
\section*{Acknowledgment}

This work was supported in part by the DARPA GRAPHS program Award N66001-14-1-4061.



%

\end{document}